\newtheorem{thm}{Theorem}
\newtheorem{asmp}{Assumption}
\newcolumntype{Y}{>{\centering\arraybackslash}X}
\newcolumntype{b}{>{\hsize=1.65\hsize}Y}
\newcolumntype{z}{>{\hsize=0.9\hsize}Y}
\newcolumntype{s}{>{\hsize=.45\hsize}Y}
\newcommand{\removelatexerror}{\let\@latex@error\@gobble}
\title{A Distributed Adaptive Algorithm for Node-Specific Signal Fusion Problems in Wireless Sensor Networks}
\name{Cem Ates Musluoglu and Alexander Bertrand\thanks{This project has received funding from the European Research Council (ERC) under the European Union's Horizon 2020 research and innovation programme (grant agreement No 802895). The authors also acknowledge the financial support of the FWO (Research Foundation Flanders) for project G081722N, and the Flemish Government (AI Research Program).}}
\address{\textit{KU Leuven, Department of Electrical Engineering (ESAT), STADIUS Center for Dynamical Systems,}\\ \textit{Signal Processing and Data Analytics, Kasteelpark Arenberg 10, box 2446, 3001 Leuven, Belgium} \\
\{cemates.musluoglu, alexander.bertrand\}@esat.kuleuven.be}
\begin{document}
\ninept

\setlength{\abovedisplayskip}{4pt}
\setlength{\belowdisplayskip}{4pt}
\setlength{\textfloatsep}{7pt plus 1.0pt minus 2.0pt}
\maketitle
\begin{abstract}
  Wireless sensor networks consist of sensor nodes that are physically distributed over different locations. Spatial filtering procedures exploit the spatial correlation across these sensor signals to fuse them into a filtered signal satisfying some optimality condition. However, gathering the raw sensor data in a fusion center to solve the problem in a centralized way would lead to high energy and communication costs. The distributed adaptive signal fusion (DASF) framework has been proposed as a generic method to solve these signal fusion problems in a distributed fashion, which reduces the communication and energy costs in the network. The DASF framework assumes that there is a common goal across the nodes, i.e., the optimal filter is shared across the network. However, many applications require a node-specific objective, while all these node-specific objectives are still related via a common latent data model. In this work, we propose the DANSF algorithm which builds upon the DASF framework, and extends it to allow for node-specific spatial filtering problems.
\end{abstract}
\begin{keywords}
Distributed Signal Processing, Distributed Spatial Filtering, Feature Fusion.
\end{keywords}
\section{Introduction}
\label{sec:intro}
\looseness=-1
Wireless sensor networks (WSNs) consist of a set of physically distributed wireless sensor nodes that are able to locally process the collected sensor data and share it with other nodes in the network. Typically, the goal is to estimate a signal or parameter satisfying some optimality condition which is dependent on the global data of the network, obtained by combining the data collected at every node. In our work, we are interested in optimal spatial filtering \cite{haykin2010handbook}, i.e., linearly combining all the signals measured within a WSN to obtain a filtered output signal that is optimal in some sense. Applications of spatial filtering include --- but are not restricted to --- wireless communication \cite{bjornson2020scalable,sanguinetti2019toward,bashar2019uplink,nayebi2016performance}, biomedical signal processing \cite{wu2017spatial,bertrand2015distributed,blankertz2007optimizing} and acoustics \cite{furnon2021distributed,zhang2018rate,benesty2008microphone}. 

Some applications require the nodes to estimate a common spatial filter as in \cite{lopes2007incremental,lopes2008diffusion,markovich2012distributed}. This usually translates mathematically as an optimization problem which is common to every node in the network. However, a node-specific spatial filter can be desired, e.g., when each node is interested in different source signals or differently filtered versions of the of the same source signal(s) \cite{furnon2021distributed,markovich2009multichannel,doclo2006theoretical}. Each node then has a different optimization problem to solve, i.e., the problem is node-specific, yet can be related, e.g., via a common latent signal model, in which case a joint processing is desirable. 

Distributed algorithms for some particular node-specific problems have been studied, such as minimum mean-squared error (MMSE) \cite{szurley2016topology,bertrand2010danse,doclo2009reduced} and linearly constrained minimum variance beamforming (LCMV) \cite{bertrand2011lcdanse,golan2010reduced,guo2021distributed}, although each problem has been treated separately in the literature. Other distributed algorithms for node-specific problems have been proposed in \cite{chen2015diffusion,plata2015distributed,nassif2017diffusion}, but can generally not be applied to spatial filtering due to the way the data is partitioned across the network. 

The distributed adaptive signal fusion (DASF) algorithm proposed in \cite{musluoglu2022unifiedp1} is a generic ``vanilla'' algorithm that allows to solve spatial filtering problems in a distributed and adaptive fashion, i.e., without centralization of the data, while converging to the solution of the central problem under mild constraints \cite{musluoglu2022unifiedp2}. The DASF algorithm captures several existing distributed signal fusion algorithms as special cases. However, it considers a single common optimization problem to be solved across the network, i.e., it does not allow for node-specific optimization problems. In this work, we propose the distributed adaptive node-specific signal fusion (DANSF) algorithm, which builds upon the DASF framework to solve generic node-specific problems in a distributed fashion. The DANSF algorithm converges to the optimal solution of each node-specific problem under the same assumptions as the original DASF algorithm.

\section{Problem Setting}
\label{sec:dasf_review}
We consider a sensor network with $K$ nodes given in the set $\mathcal{K}=\{1,\dots,K\}$ and connected following a topology given by a graph $\mathcal{G}$, where each link between two nodes $k$ and $l$ implies that nodes $k$ and $l$ can share data with each other. Every node senses an $M_k$-channel signal $\mathbf{y}_k$ so that the network-wide signal can be defined as
\begin{equation}\label{eq:y_part}
  \mathbf{y}=[\mathbf{y}_1^T,\dots,\mathbf{y}_K^T]^T,
\end{equation}
while an observation at time sample $t$ is denoted as $\mathbf{y}(t)\in\mathbb{R}^{M}$, where $M=\sum_{k}M_k$. The signal $\mathbf{y}$ should be viewed as a multivariate stochastic variable, assumed to be ergodic and (short-term) stationary. Each node $k$ acts as a data sink and is interested in finding its own optimal (network-wide) spatial filter $X_k\in\mathbb{R}^{M\times Q}$ and the corresponding filter output $X_k^T\mathbf{y}(t)$, which should satisfy a node-specific optimality condition. We envisage a generic problem statement where we assume that the optimal filter $X_k$ is the solution of an optimization problem of the following form (for node $k$):
\begin{equation}\label{eq:prob}
  \begin{aligned}
  \mathbb{P}_k:\;\underset{X_k\in\mathbb{R}^{M\times Q}}{\text{minimize } } \quad & \varphi_k\big(X_k^{T}\mathbf{y}(t),X_k^{T}B\big)\\
  \textrm{subject to} \quad & \eta_{k,j}\big(X_k^{T}\mathbf{y}(t),X_k^{T}B\big)\leq 0\;\textrm{ $\forall j\in\mathcal{J}_I$,}\\
    & \eta_{k,j}\big(X_k^{T}\mathbf{y}(t),X_k^{T}B\big)=0\;\textrm{ $\forall j\in\mathcal{J}_E$,}
  \end{aligned}
\end{equation}
where the sets $\mathcal{J}_\mathcal{I}$ and $\mathcal{J}_\mathcal{E}$ represent the index sets for inequality and equality constraints respectively. Some examples of problems of the form (\ref{eq:prob}) are shown in Table \ref{tab:ex_prob}. The functions $\varphi_k$ and $\eta_{k,j}$, $j\in\mathcal{J}_\mathcal{I}\cup\mathcal{J}_\mathcal{E}$ are real and scalar-valued functions, while the subscript $k$ specifies that a function or variable is specific for node $k$. Moreover, as the filter output $(X_k^T\mathbf{y}(t))$ is a stochastic variable, the functions in (\ref{eq:prob}) should contain an operator to extract a real-valued quantity from this term, such as an expectation operator. Note that a solution $X_k^*(t)$ of (\ref{eq:prob}) depends on the time sample $t$, as the statistics of the signal $\mathbf{y}$ are allowed to change in time. The proposed DANSF algorithm will act as a block-adaptive filter that estimates and tracks the changes in the data statistics. However, from short-term stationarity, we assume that a solution $X_k^*(t)$ of (\ref{eq:prob}) changes slowly in time compared to the convergence rate of the DANSF algorithm. Therefore we omit the time-dependence of solutions of (\ref{eq:prob}) for mathematical tractability and assume stationarity within convergence time of the algorithm. 

$B$ is a deterministic $M\times L$ matrix independent of the time index $t$, and is commonly encountered to enforce a structure on the variable $X_k$ (see the LCMV example in Table \ref{tab:ex_prob}). Although treated similarly, the distinction between stochastic signals $\mathbf{y}$ and deterministic matrices $B$ are made to emphasize the adaptive and stochastic properties of the algorithm we will present. Similarly to the partitioning of $\mathbf{y}$ in (\ref{eq:y_part}), these deterministic matrices are assumed to be obtained by stacking $M_k\times L$ matrices $B_k$, where $B_k$ is supposed to be available at node $k$, i.e., $B=[B_1^T,\dots,B_K^T]^T$. Moreover, for a fixed node $k$, we also allow Problem (\ref{eq:prob}) to have multiple variables, signals and deterministic matrices (in addition to $X_k$, $\mathbf{y}$ and $B$, respectively), which are however not represented in (\ref{eq:prob}) for conciseness. We assume that every parameter in (\ref{eq:prob}), except $\mathbf{y}$ and $B$, is available at node $k$. For example, the signal $\mathbf{d}_k$ and parameter $H_k$ in the MMSE and LCMV examples respectively of Table \ref{tab:ex_prob} should be available at node $k$. 

For each node $k$, let us denote by $\mathcal{X}_k^*$ the solution set of $\mathbb{P}_k$ and $X_k^*\in\mathcal{X}_k^*$ a specific solution. We then make the following assumption on the set of problems $\mathbb{P}_k$ which links the solutions across the nodes.
\begin{asmp}\label{asmp:proportional}
  There exists a set of invertible $Q\times Q$ matrices $\{D_{k,l}\}_{(k,l)\in\mathcal{K}^2}$ such that for any pair $(k,l)$ of nodes, the solutions $X_k^*\in\mathcal{X}_k^*$ and $X_l^*\in\mathcal{X}_l^*$ satisfy $X_k^*=X_l^*\cdot D_{k,l}$.
\end{asmp}

\begin{table}
  \renewcommand{\arraystretch}{1.5}
  \caption{Examples of problems with node-specific objectives as in (\ref{eq:prob}). MMSE is the minimum mean squared error problem and LCMV the linearly constrained minimum variance beamforming problem. $\mathbb{E}$ denotes the expectation operator.}
  \label{tab:ex_prob}
  \begin{tabularx}{\columnwidth}{|s|b|z|}
  \hline
   Problem & Cost function $\varphi_k$ & Constraints \\ \hhline{|=|=|=|}
   MMSE & $\min \mathbb{E}[||\mathbf{d}_k(t)-X_k^{T}\mathbf{y}(t)||^2]$ & --- \\ \hline
   LCMV  & $\min \mathbb{E}[||X_k^{T}\mathbf{y}(t)||^2]$ & $X_k^{T}B=H_k$ \\ \hline
  \end{tabularx}
\end{table}

These properties were also exploited in \cite{bertrand2010danse,szurley2016topology,bertrand2011lcdanse,guo2021distributed,golan2010reduced,doclo2009reduced} for the design of distributed fusion algorithms for MMSE and LCMV problems. Taking the example of the MMSE problem in Table \ref{tab:ex_prob}, we have $X_k^*=R_{\mathbf{yy}}^{-1}R_{\mathbf{yd}_k}$, with $R_{\mathbf{yy}}=\mathbb{E}[\mathbf{y}(t)\mathbf{y}^T(t)]$ and $R_{\mathbf{yd}_k}=\mathbb{E}[\mathbf{y}(t)\mathbf{d}_k^{T}(t)]$. We then observe that Assumption \ref{asmp:proportional} is satisfied if $\mathbf{d}_k(t)=D_{k,l}^{T}\mathbf{d}_l(t)$. This is true if the desired signals at the different nodes are all different mixtures from the same set of latent sources. This is common in, e.g., hearing aids where the acoustic mixing process needs to be preserved for spatial hearing \cite{golan2010reduced,doclo2009reduced}. For the LCMV example of Table \ref{tab:ex_prob}, a similar argument shows that Assumption \ref{asmp:proportional} is satisfied if $H_k=D_{k,l}^{T}H_l$. In this paper, we propose a unifying algorithmic framework, which has \cite{bertrand2011lcdanse,bertrand2010danse,golan2010reduced,doclo2009reduced} as special cases, while also admitting new problems (see, e.g., Section \ref{sec:sim}), assuming they can be written in the form (\ref{eq:prob}).

\section{DASF for Node-Specific Problems}

In this section, we derive the DANSF algorithm which extends the DASF framework \cite{musluoglu2022unifiedp1,musluoglu2022unifiedp2} to also admit node-specific problems of the form (\ref{eq:prob}). We refer to \cite{musluoglu2022unifiedp1} for a thorough presentation of the DASF algorithm, from which we here only extract the essential ingredients to allow us to define the proposed DANSF algorithm.

At each iteration $i$, an updating node $q\in\mathcal{K}$ is selected and the network represented by the graph $\mathcal{G}$ is temporarily pruned to a tree $\mathcal{T}^i(\mathcal{G},q)$, such that there is a unique path between any pair of nodes in the network. The pruning function is a free design choice, however it should not remove any links between node $q$ and its neighbors $n\in\mathcal{N}_q$ \cite{musluoglu2022unifiedp1}, where $\mathcal{N}_q$ denotes the set of neighboring nodes of node $q$. In the remaining parts of the algorithm derivation, the neighbors of any node are defined to be the ones after pruning the network, i.e., based on the edges of $\mathcal{T}^i(\mathcal{G},q)$. Note that the updating node changes at each iteration $i$ of the algorithm, which implies that a different tree is used in each iteration.

Let us partition each $X_k$, i.e., the network-wide spatial filter that generates the desired node-specific output signal for node $k$, as
\begin{equation}\label{eq:X_part}
  X_k=[X_{k1}^{T},\dots,X_{kK}^{T}]^T,
\end{equation}
such that each $X_{kl}$ is $M_l\times Q$. For each $k\in\mathcal{K}$, we define $X_{kk}$, the $k-$th block of $X_k$, to be the compressor at node $k$. At iteration $i$, every node $k\neq q$ uses its current estimate of $X_{kk}$ to compress the local $M_k-$dimensional sensor signal $\mathbf{y}_k$ into a $Q-$dimensional one. A similar compression applied to node $k$'s matrix $B_k$ leads to
\begin{equation}\label{eq:y_B_compress}
  \widehat{\mathbf{y}}_k^i\triangleq X_{kk}^{iT}\mathbf{y}_k,\;\widehat{B}_k^i\triangleq X_{kk}^{iT}B_k,
\end{equation}
where $X_{kk}^0$ is initialized randomly for each $k$. The nodes will then fuse and forward their compressed data (\ref{eq:y_B_compress}) towards node $q$ as explained next. We note that a batch of $N$ time samples of $\widehat{\mathbf{y}}_k^i$ should be transmitted by node $k$, where $N$ should be chosen such that there are enough samples to estimate the relevant statistics of $\widehat{\mathbf{y}}_k^i$ that are used in the objective and constraints of (\ref{eq:prob}) (in most practical examples, this consists of all the second-order statistics). At each iteration, a different block of $N$ samples is used, so that in the case of changes in statistics of the signal $\mathbf{y}$, the proposed method can adaptively track these changes. Each node $k$ first waits until receiving data from all of its neighbors except one, which we denote as $n$. Node $k$ then fuses the data received from its neighbors $l\in\mathcal{N}_k\backslash\{n\}$ with its data (\ref{eq:y_B_compress}) and the result is then transmitted to node $n$, which receives $N$ samples of
\begin{equation}\label{eq:sum_fwd}
  \widehat{\mathbf{y}}_{k \rightarrow n}^i\triangleq X_{kk}^{iT}\mathbf{y}_k+\sum_{l\in\mathcal{N}_k\backslash\{n\}}\widehat{\mathbf{y}}_{l\rightarrow k}^i,
\end{equation}
where $\widehat{\mathbf{y}}_{l\rightarrow k}^i$ is the data received by node $k$ from its neighbor $l$. Note that the second term of (\ref{eq:sum_fwd}) is recursive and vanishes for nodes with a single neighbor, i.e., leaf nodes, implying that the recursion in (\ref{eq:sum_fwd}) is bootstrapped at the leaf nodes of the tree $\mathcal{T}^{i}(\mathcal{G},q)$. The fused data eventually reaches node $q$, which receives $N$ samples of
\begin{equation}\label{eq:sum_fwd_n}
  \widehat{\mathbf{y}}_{n\rightarrow q}^i=X_{nn}^{iT}\mathbf{y}_n+\sum_{k\in\mathcal{N}_n\backslash\{q\}}\widehat{\mathbf{y}}_{k\rightarrow n}^i=\sum_{k\in\mathcal{B}_{nq}}\widehat{\mathbf{y}}_k^i,
\end{equation}
from all its neighbors $n\in\mathcal{N}_q$. In (\ref{eq:sum_fwd_n}), $\mathcal{B}_{nq}$ is defined to be the subgraph of $\mathcal{T}^i(\mathcal{G},q)$ which contains node $n$ and obtained after cutting the link between nodes $n$ and $q$. A similar recursion applies for the compressed matrices $\widehat{B}_k^i$, and we define $\widehat{B}_{n\rightarrow q}^i$ as the matrix analogous to (\ref{eq:sum_fwd_n}) received by node $q$. Defining $\mathcal{N}_q\triangleq\{n_1,\dots,n_{|\mathcal{N}_q|}\}$, the data collected at node $q$ can be structured as
\begin{equation}\label{eq:tree_data}
  \begin{aligned}
    \widetilde{\mathbf{y}}_q^i&\triangleq[\mathbf{y}_q^T,\widehat{\mathbf{y}}_{n_1\rightarrow q}^{iT},\dots,\widehat{\mathbf{y}}_{n_{|\mathcal{N}_q|}\rightarrow q}^{iT}]^T\;\in\mathbb{R}^{\widetilde{M}_q},\\
    \widetilde{B}_q^i&\triangleq [B_q^T,\widehat{B}_{n_1\rightarrow q}^{iT},\dots,\widehat{B}_{n_{|\mathcal{N}_q|}\rightarrow q}^{iT}]^T\;\in\mathbb{R}^{\widetilde{M}_q\times L}
  \end{aligned}
\end{equation}
where $\widetilde{M}_q=|\mathcal{N}_q|\cdot Q+M_q$.
Using the local data in (\ref{eq:tree_data}), node $q$ creates a compressed version of its original problem $\mathbb{P}_q$ given by
\begin{equation}\label{eq:loc_prob}
  \begin{aligned}
  \underset{\widetilde{X}_q\in\mathbb{R}^{\widetilde{M}_q\times Q}}{\text{minimize } } \quad & \varphi_q\big(\widetilde{X}_q^T\widetilde{\mathbf{y}}_q^i(t),\widetilde{X}_q^T\widetilde{B}_q^i\big)\\
  \textrm{subject to} \quad & \eta_{q,j}\big(\widetilde{X}_q^T\widetilde{\mathbf{y}}_q^i(t),\widetilde{X}_q^T\widetilde{B}_q^i\big)\leq 0\;\textrm{ $\forall j\in\mathcal{J}_I$,}\\
    & \eta_{q,j}\big(\widetilde{X}_q^T\widetilde{\mathbf{y}}_q^i(t),\widetilde{X}_q^T\widetilde{B}_q^i\big)=0\;\textrm{ $\forall j\in\mathcal{J}_E$.}
  \end{aligned}
\end{equation}
Note that (\ref{eq:loc_prob}) has the same objective and constraint functions as (\ref{eq:prob}) (for $k=q$), hence a solver for (\ref{eq:prob}) can also be used locally at node $q$ to solve the compressed problem (\ref{eq:loc_prob}). This is an interesting feature of the DASF framework, which is inherited in DANSF as well.

Node $q$ then solves its local problem (\ref{eq:loc_prob}) to obtain $\widetilde{X}_q^{i+1}$. In the cases where (\ref{eq:loc_prob}) has multiple solutions, we choose $\widetilde{X}_q^{i+1}$ by minimizing $||\widetilde{X}_q-\widetilde{X}_q^{i}||_F$ over all solutions of Problem (\ref{eq:loc_prob}), with
\begin{equation}\label{eq:X_fixed}
  \widetilde{X}_q^i=[X_{qq}^{iT},I_Q,\dots,I_Q]^T
\end{equation}
and where $X_{qq}^{i}$ corresponds to the current estimate of the compressor $X_{qq}$ of node $q$.
The optimal solution $\widetilde{X}_q^{i+1}$ is then partitioned as
\begin{equation}\label{eq:X_tilde_sol}
  \widetilde{X}_q^{i+1}=[X_{qq}^{(i+1)T},G_{qn_1}^{(i+1)T},\dots,G_{qn_{|\mathcal{N}_q|}}^{(i+1)T}]^T,
\end{equation}
with each $G-$matrix being $Q\times Q$. The new estimate of the variable $X_q$ at iteration $i$ is then
\begin{equation}\label{eq:X_q_est}
  X_{qk}^{i+1}=\begin{cases}
  X_{qq}^{i+1} & \text{if $k=q$} \\
  X_{kk}^{i}G_{qn}^{i+1} & \text{if $k\in\mathcal{B}_{nq}$, $n\in\mathcal{N}_q$}.
  \end{cases}
\end{equation}
For nodes $k\neq q$, a new estimate of their variable $X_k$ can be estimated in the following way. Node $q$ first transmits
\begin{equation}\label{eq:full_compress}
  \mathbf{z}_q^{i+1}(t)\triangleq\widetilde{X}_q^{(i+1)T}\widetilde{\mathbf{y}}_q^i(t),\quad Z_q^{i+1}\triangleq\widetilde{X}_q^{(i+1)T}\widetilde{B}_q^i
\end{equation}
to each node $k$, which can either be broadcast by node $q$ or transmitted following the pruned network topology (see \cite{szurley2016topology} for a discussion on efficient ways to achieve this). Note that again $N$ samples of $\mathbf{z}_q^{i+1}$ should be sent by node $q$ to the other nodes. Node $k$ then solves
\begin{equation}\label{eq:loc_prob_k}
  \begin{aligned}
  \underset{F_{kq}\in\mathbb{R}^{Q\times Q}}{\text{minimize } } \quad & \varphi_k\big(F_{kq}^T\mathbf{z}_q^{i+1}(t),F_{kq}^TZ_q^{i+1}\big)\\
  \textrm{subject to} \quad & \eta_{k,j}\big(F_{kq}^T\mathbf{z}_q^{i+1}(t),F_{kq}^TZ_q^{i+1}\big)\leq 0\;\textrm{ $\forall j\in\mathcal{J}_I$,}\\
    & \eta_{k,j}\big(F_{kq}^T\mathbf{z}_q^{i+1}(t),F_{kq}^TZ_q^{i+1}\big)=0\;\textrm{ $\forall j\in\mathcal{J}_E$,}
  \end{aligned}
\end{equation}
such that a new estimate of its variable $X_k$ at iteration $i$ is given by
\begin{equation}\label{eq:update_X_k}
  X_k^{i+1}=X_q^{i+1}F_{kq}^{i+1},
\end{equation}
where $F_{kq}^{i+1}$ is a solution of (\ref{eq:loc_prob_k}) at node $k$. Note that the compressor at node $k$, i.e., $X_{kk}$ is part of $X_k$, i.e., the compression matrix of node $k$ is also updated by (\ref{eq:update_X_k}). Since only the compressor matrices $X_{kk}$ (for all $k$) play a role within the algorithm (as these define the transmitted signals), the update of the blocks $X_{kl}$ with $k\neq l$ can be omitted, unless the nodes are explicitly interested in knowing the coefficients of the full matrix $X_k$. However, in most applications, the filter output signal $\mathbf{z}_k(t)=X_k^T\mathbf{y}(t)$ is sought after, rather than the filter $X_k$ itself, which can be computed at each node $k$ as
\begin{equation}\label{eq:est_filter}
  \mathbf{z}_k^{i+1}(t)\triangleq\begin{cases}
    \widetilde{X}_q^{(i+1)T}\widetilde{\mathbf{y}}_q^i(t) & \text{if $k=q$} \\
    F_{kq}^{(i+1)T}\mathbf{z}_q^{i+1}(t) & \text{if $k\neq q$},
    \end{cases}
\end{equation}
without keeping track of other subblocks $X_{kq}$ for $k\neq q$. This is because the filtering of subblocks $X_{kq}$ is done at node $q$ instead of node $k$, using the compressor $X_{qq}$, of which the output is transformed with $F_{kq}$ at node $k$.

\looseness=-1
At each iteration, this process is repeated by selecting a different updating node. Algorithm \ref{alg:dasf_k} summarizes the steps of the DANSF algorithm described above. We note that \cite{bertrand2011lcdanse,bertrand2010danse,golan2010reduced,doclo2009reduced} are special cases of this algorithm. The method is able to adapt to and track changes in the signal statistics of $\mathbf{y}$, as is the case for the original DASF algorithm. This is because a new block of $N$ samples, e.g., $\{\mathbf{y}(t)\}_{t=iN}^{(i+1)N-1}$, is measured and used at each iteration to solve (\ref{eq:loc_prob}), i.e., different iterations of the DANSF algorithm are spread over different sample blocks across the time dimension, similar to an adaptive filter, making each $X_k^i$ an estimate of $X_k^*(t)$, where $t=iN$.

The following theorem guarantees convergence in cost for the DANSF algorithm:

\begin{figure}[!t]
  \removelatexerror
  \DontPrintSemicolon
  \begin{algorithm}[H]
  \caption{DANSF algorithm}\label{alg:dasf_k}
  \SetKwInOut{Output}{output}
  \BlankLine
  $X_{kk}^0$ initialized randomly for each $k$, $i\gets0$.\;
  \Repeat
  {
  Choose the updating node as $q\gets (i\mod K)+1$.\;
  1) The network $\mathcal{G}$ is pruned into a tree $\mathcal{T}^i(\mathcal{G},q)$.\;

  2) Each node $k$ collects $N$ samples of $\mathbf{y}_k$, compress these to $N$ samples of $\widehat{\mathbf{y}}^{i}_k$ and also compute $\widehat{B}_k^i$ as in (\ref{eq:y_B_compress}).\;

  3) The nodes sum-and-forward their compressed data towards node $q$ via the recursive rule (\ref{eq:sum_fwd}) (and a similar rule for the $\widehat{B}_k^i$'s). Node $q$ eventually receives $N$ samples of $\widehat{\mathbf{y}}^{i}_{n\rightarrow q}$ given in (\ref{eq:sum_fwd_n}) and similarly, $\widehat{B}_{n\rightarrow q}^i$, from all its neighbors $n\in\mathcal{N}_q$.\; 

  \At{Node $q$}
  {
    4a) Compute the solution of the local problem (\ref{eq:loc_prob}) to obtain $\widetilde{X}_q^{i+1}$. If the solution is not unique, select $\widetilde{X}_q^{i+1}$ which minimizes $||\widetilde{X}_q^{i+1}-\widetilde{X}_q^i||_F$ with $\widetilde{X}_q^i$ defined in (\ref{eq:X_fixed}).\;
    4b) Partition $\widetilde{X}^{i+1}_q$ as in (\ref{eq:X_tilde_sol}).\;
    4c) Update the estimate of $X_q$ as in (\ref{eq:X_q_est}).\;
    4d) Disseminate $Z_q^{i+1}$ and $N$ samples of $\mathbf{z}_q^{i+1}$ as defined in (\ref{eq:full_compress}) within the tree to each data sink node.
  }
  
  5) Nodes $k\neq q$ update $X_k$ according to (\ref{eq:update_X_k}) by solving (\ref{eq:loc_prob_k}) and can estimate its filtered output as in (\ref{eq:est_filter}).\;
  
  $i\gets i+1$\;
  }
  \end{algorithm}
\end{figure}
 
\begin{thm}\label{thm:f_conv}
  \looseness=-1
  Let us denote by $(\varphi_k^i)_i$ the sequence of function values $\varphi_k\left(X_k^{iT}\mathbf{y}(t),X_k^{iT}B\right)$, $\forall k\in\mathcal{K}$, obtained from Algorithm \ref{alg:dasf_k}. Then, the sequence $(\varphi_k^i)_i$ is non-increasing, and converges for each $k$, i.e., the cost function at each node is monotonically decreasing. Furthermore, $X_k^i$ always satisfies the constraints of (\ref{eq:prob}) for each $k$ and $i>0$.
\end{thm}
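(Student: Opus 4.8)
The plan is to prove both assertions at once by induction on the iteration index $i$, carrying along a structural invariant that ties the node-specific iterates together through Assumption \ref{asmp:proportional}. First I would make explicit the reparametrization that is implicit in (\ref{eq:tree_data})--(\ref{eq:X_q_est}): the fused quantities in (\ref{eq:tree_data}) are linear images of the raw data, i.e.\ there is a matrix $C_q^i$ (assembled from the current compressors $\{X_{kk}^i\}$) with $\widetilde{\mathbf{y}}_q^i=C_q^{iT}\mathbf{y}$ and $\widetilde{B}_q^i=C_q^{iT}B$, so that $(\widetilde{X}_q^T\widetilde{\mathbf{y}}_q^i,\widetilde{X}_q^T\widetilde{B}_q^i)=((C_q^i\widetilde{X}_q)^T\mathbf{y},(C_q^i\widetilde{X}_q)^TB)$. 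Consequently the compressed problem (\ref{eq:loc_prob}) is \emph{exactly} $\mathbb{P}_q$ restricted to the filter subspace $\mathcal{R}_q^i\triangleq\{C_q^i\widetilde{X}_q:\widetilde{X}_q\in\mathbb{R}^{\widetilde{M}_q\times Q}\}$, with $X_q^{i+1}=C_q^i\widetilde{X}_q^{i+1}\in\mathcal{R}_q^i$ by (\ref{eq:X_q_est}); likewise (\ref{eq:loc_prob_k}) is $\mathbb{P}_k$ restricted to $\{X_q^{i+1}F_{kq}:F_{kq}\in\mathbb{R}^{Q\times Q}\}$, and $X_k^{i+1}=X_q^{i+1}F_{kq}^{i+1}$ by (\ref{eq:update_X_k}).

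Second, I would push Assumption \ref{asmp:proportional} through these restricted problems. Since the proportionality $X_k^*=X_l^*D_{k,l}$ originates from the node-specific parameters (e.g.\ $\mathbf{d}_k$, $H_k$) and not from $\mathbf{y}$ or $B$, it transfers verbatim to any problem obtained by replacing $(\mathbf{y},B)$ with compressed data. Applied to (\ref{eq:loc_prob_k}) (data $\mathbf{z}_q^{i+1},Z_q^{i+1}$) this gives $F_{kq}^{i+1}=F_{qq}^{i+1}D_{k,q}$; and because $X_q^{i+1}$ already minimizes $\varphi_q$ over the larger set $\mathcal{R}_q^i\supseteq\{X_q^{i+1}F\}$, optimality forces $F_{qq}^{i+1}=I_Q$, hence $F_{kq}^{i+1}=D_{k,q}$ and $X_k^{i+1}=X_q^{i+1}D_{k,q}$. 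This preserves the invariant $X_k^{i+1}=X_l^{i+1}D_{k,l}$, which is self-consistent via the cocycle identity $D_{k,q}=D_{l,q}D_{k,l}$ that itself follows from Assumption \ref{asmp:proportional}. The same transfer, applied over $\mathcal{R}_q^i$ together with the fact that $\mathcal{R}_q^iD_{k,q}=\mathcal{R}_q^i$, shows that every pulled-back cost $X\mapsto\varphi_k((XD_{k,q})^T\mathbf{y},(XD_{k,q})^TB)$ is minimized over $\mathcal{R}_q^i$ at the \emph{same} point $X_q^{i+1}$.

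Third, I would reduce \textbf{both} conclusions to a single representability claim. Using the invariant, $\varphi_k^{i+1}=\varphi_k((X_q^{i+1}D_{k,q})^T\mathbf{y},(X_q^{i+1}D_{k,q})^TB)$ and $\varphi_k^{i}=\varphi_k((X_q^{i}D_{k,q})^T\mathbf{y},(X_q^{i}D_{k,q})^TB)$, and by the previous paragraph $\varphi_k^{i+1}$ equals the minimum over $\mathcal{R}_q^i$ of the pulled-back cost. Hence $\varphi_k^{i+1}\le\varphi_k^i$ for \emph{every} $k$ as soon as $X_q^i\in\mathcal{R}_q^i$. The very same membership exhibits $\widetilde{X}_q^i$ with $C_q^i\widetilde{X}_q^i=X_q^i$ (cf.\ (\ref{eq:X_fixed})) as a feasible point of (\ref{eq:loc_prob}), and, after right-multiplication by $D_{k,q}$, a feasible point of (\ref{eq:loc_prob_k}); combined with the induction hypothesis that $X_q^i$ is feasible for $\mathbb{P}_q$, this makes the subproblems solvable and shows that $X_q^{i+1}$ and $X_k^{i+1}=X_q^{i+1}D_{k,q}$ satisfy the constraints of $\mathbb{P}_q$ and $\mathbb{P}_k$. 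The base case is the first update, where feasibility comes directly from solving (\ref{eq:loc_prob}); boundedness of $\varphi_k$ from below then yields convergence of the monotone sequences.

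The hard part will be exactly this last step, $X_q^i\in\mathcal{R}_q^i$. In the common-filter DASF setting it is automatic because every block of the network-wide filter coincides with the corresponding compressor; in the node-specific case the block of $X_q^i$ attached to node $k$ equals $X_{kk}^i$ pre-compensated by $D_{k,q}^{-1}$, whereas membership in $\mathcal{R}_q^i$ forces all blocks within one branch to share a single $Q\times Q$ factor (because the fusion rule (\ref{eq:sum_fwd})--(\ref{eq:sum_fwd_n}) delivers only one summed signal per neighbor). Reconciling these per-node factors with the per-branch structure, and in particular showing that the proportionality invariant carried from iteration $i-1$ stays compatible with the \emph{different} pruned tree at iteration $i$ so that $X_q^i$ can indeed be written through (\ref{eq:X_q_est}), is the technical core; I expect to need the cocycle consistency of $\{D_{k,l}\}$ together with the fact that the updates (\ref{eq:X_q_est}) and (\ref{eq:update_X_k}) keep each branch block aligned with the current compressors.
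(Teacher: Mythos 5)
Your overall route is the same as the paper's: reparametrize the local problem through the compressive matrix $C_q^i$ so that (\ref{eq:loc_prob}) becomes $\mathbb{P}_q$ restricted to $\mathcal{R}_q^i=\{C_q^i\widetilde{X}_q\}$, transfer Assumption \ref{asmp:proportional} to the compressed problems so that node $q$'s update also minimizes every pulled-back $\varphi_k$ over $\mathcal{R}_q^i$ up to a right $Q\times Q$ factor that the solution $F_{kq}^{i+1}$ of (\ref{eq:loc_prob_k}) compensates, and then reduce both the monotonic decrease and the feasibility claim to the single membership $X_q^i\in\mathcal{R}_q^i$. This is precisely the skeleton of the paper's argument; your treatment of the $k\neq q$ case (via the invariant $X_k^{i+1}=X_q^{i+1}D_{k,q}$ and the identity $\mathcal{R}_q^iD=\mathcal{R}_q^i$) is in fact more explicit than the paper's, which makes the same point by comparing (\ref{eq:loc_prob}) with the hypothetical problem (\ref{eq:loc_prob_k_q}).

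The gap is that you never establish the one step you yourself call ``the technical core'': $X_q^i\in\mathcal{R}_q^i$. Without it neither $\varphi_q^{i+1}\le\varphi_q^i$ nor the feasibility of $X_k^i$ follows, so the proof is not finished. For comparison, the paper disposes of this step in one line: it takes the warm start $\widetilde{X}_q^i=[X_{qq}^{iT},I_Q,\dots,I_Q]^T$ of (\ref{eq:X_fixed}) and asserts $X_q^{iT}\mathbf{y}(t)=\widetilde{X}_q^{iT}\widetilde{\mathbf{y}}_q^i(t)$, i.e., that block $l$ of $X_q^i$ equals the compressor $X_{ll}^i$ for every $l\neq q$. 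Your suspicion about this identity is well founded: writing $q'$ for the updating node of iteration $i-1$, step 5 gives $X_{ll}^{i}=X_{q'l}^{i}F_{lq'}^{i}$ while block $l$ of $X_q^{i}$ is $X_{q'l}^{i}F_{qq'}^{i}$, so the two differ by the factor $(F_{lq'}^{i})^{-1}F_{qq'}^{i}=\widetilde{D}_{q,l}$, which varies with $l$, whereas membership in $\mathcal{R}_q^i$ only permits a single matrix $G_n$ per branch $\mathcal{B}_{nq}$ because the sum-and-forward rule (\ref{eq:sum_fwd})--(\ref{eq:sum_fwd_n}) delivers one aggregated signal per neighbor. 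In a fully connected network every branch is a singleton and the membership holds with $G_n=\widetilde{D}_{q,n}$, but for a general tree it requires an additional argument (or a different warm start / bookkeeping of which matrix each node uses as its compressor) that neither your proposal nor the paper's one-line assertion supplies. So: correct reduction and a sharp diagnosis of where the difficulty lies, but the decisive step is left open.
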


\begin{proof}
  For conciseness, we omit the matrix $B$ in this proof (it can be treated similarly to $\mathbf{y}$ within the proof). We observe from (\ref{eq:sum_fwd_n})-(\ref{eq:tree_data}) that there exists a linear relationship between $\widetilde{\mathbf{y}}_q^i$ and $\mathbf{y}$ such that there exists a compressive $M\times \widetilde{M}_q$ matrix $C_q^i$ for which $\widetilde{\mathbf{y}}_q^i(t)=C_q^{iT}\mathbf{y}(t)$. Using this matrix $C_q^i$, we can write $\widetilde{X}_q^{T}\widetilde{\mathbf{y}}_q^i(t)=\widetilde{X}_q^T(C_q^{iT}\mathbf{y}(t))=(C_q^i\widetilde{X}_q)^T\mathbf{y}(t)$. Therefore, at iteration $i$ and updating node $q$, we can observe that $X_q$ is parameterized as $X_q=C_q^i\widetilde{X}_q$. Substituting $\widetilde{X}_q^T\widetilde{\mathbf{y}}_q^i(t)$ with $(C_q^i\widetilde{X}_q)^T\mathbf{y}(t)$ within (\ref{eq:loc_prob}), we find that any point $C_q^i\widetilde{X}_q$ is a feasible point of the problem (\ref{eq:prob}) at node $q$ if $\widetilde{X}_q$ is a feasible point of the local problem (\ref{eq:loc_prob}) at node $q$, and vice versa. In particular, since $X_q^{iT}\mathbf{y}(t)=\widetilde{X}_q^{iT}\widetilde{\mathbf{y}}_q^i(t)$ for $\widetilde{X}_q^i$ given in (\ref{eq:X_fixed}), we have $X_q^i=C_q^i\widetilde{X}_q^i$. Additionally, from (\ref{eq:X_tilde_sol})-(\ref{eq:X_q_est}), it can be shown that $X_q^{i+1}=C_q^i\widetilde{X}_q^{i+1}$. Therefore, both $\widetilde{X}_q^{i}$ and $\widetilde{X}_q^{i+1}$ are feasible points of the local problem (\ref{eq:loc_prob}), i.e., belong to $\widetilde{\mathcal{S}}_q^i$, which we define to be the constraint set of (\ref{eq:loc_prob}). Let us now define the functions $f_k$ as 
  \begin{equation}\label{eq:phi_compact}
    f_k(X_k)=\varphi_k\big(X_k^T\mathbf{y}(t)\big),\;\forall k\in\mathcal{K}.
  \end{equation}
  For the updating node $q$, at iteration $i$, we have that $f_q(C_q^i\widetilde{X}_q^{i+1})\leq f_q(C_q^i\widetilde{X}_q)$ for any $\widetilde{X}_q\in\widetilde{\mathcal{S}}_q^i$, since $\widetilde{X}_q^{i+1}$ solves the local problem (\ref{eq:loc_prob}). Moreover, since we have shown above that $\widetilde{X}_q^i\in\widetilde{\mathcal{S}}_q^i$, we have $f_q(C_q^i\widetilde{X}_q^{i+1})=f_q(X_q^{i+1})\leq f_q(C_q^i\widetilde{X}_q^i)=f_q(X_q^i)$. This shows a monotonic decrease of the cost at node $q$. For the case $k\neq q$, let us (hypothetically) assume that the updating node $q$ would have used the cost function of node $k$, i.e., it solves
  \begin{equation}\label{eq:loc_prob_k_q}
    \begin{aligned}
    \underset{\widetilde{X}_q\in\mathbb{R}^{\widetilde{M}_q\times Q}}{\text{minimize } } \quad & \varphi_k\big(\widetilde{X}_q^T\widetilde{\mathbf{y}}_q^i(t)\big)\\
    \textrm{subject to} \quad & \eta_{k,j}\big(\widetilde{X}_q^T\widetilde{\mathbf{y}}_q^i(t)\big)\leq 0\;\textrm{ $\forall j\in\mathcal{J}_I$,}\\
      & \eta_{k,j}\big(\widetilde{X}_q^T\widetilde{\mathbf{y}}_q^i(t)\big)=0\;\textrm{ $\forall j\in\mathcal{J}_E$,}
    \end{aligned}
  \end{equation}
  instead of (\ref{eq:loc_prob}). As mentioned earlier, (\ref{eq:loc_prob}) and (\ref{eq:loc_prob_k_q}) are compressed versions of the problem (\ref{eq:prob}) for node $q$ and node $k$ respectively, i.e., the data $\mathbf{y}$ of (\ref{eq:prob}) is replaced by $\widetilde{\mathbf{y}}_q^i$ in (\ref{eq:loc_prob}) and (\ref{eq:loc_prob_k_q}). Therefore, the local problems (\ref{eq:loc_prob})-(\ref{eq:loc_prob_k_q}) satisfy Assumption \ref{asmp:proportional}, i.e., there exists a matrix $\widetilde{D}_{q,k}$ such that $\widetilde{X}_q^{i+1}=\widetilde{X}_k^{i+1}\cdot \widetilde{D}_{q,k}$. This implies that node $q$ also optimizes $\varphi_k$ up to a transformation with a $Q\times Q$ matrix. The latter transformation is compensated for by finding a proper transformation $F_{kq}$ at node $k$ by solving (\ref{eq:loc_prob_k}). Since this argument holds for any iteration $i$, and from the relationship $X_k^{i+1}=X_q^{i+1}F_{kq}^{i+1}$ in (\ref{eq:update_X_k}), we have $f_k(X_k^{i+1})\leq f_k(X_k^i)$ even though node $q$ optimizes $f_q$ instead of $f_k$ at iteration $i$. From the definition of $f_k$ in (\ref{eq:phi_compact}), $(\varphi_k^i)_i$ is therefore a non-increasing sequence for each $k$. Since these sequences are respectively lower bounded by the minimal value of $\varphi_k$ achieved for $X_k^*$, say $\varphi_k^*$, over the constraint set of $\mathbb{P}_k$ in (\ref{eq:prob}), they are converging sequences.
\end{proof}

The convergence of the DANSF algorithm for each node $k$ to a solution $X_k^{*}\in\mathcal{X}_k^{*}$ of $\mathbb{P}_k$ is summarized in the theorem below. The convergence guarantee is under the same mild technical conditions as the DASF algorithm \cite{musluoglu2022unifiedp1,musluoglu2022unifiedp2}. The proof is omitted due to space constraints, but follows similar steps as in \cite{musluoglu2022unifiedp2}.

\begin{thm}[Proof Omitted]\label{thm:X_conv_thm}
  Suppose that, for each node $k$, Problem (\ref{eq:prob}) satisfies Assumption \ref{asmp:proportional} and the conditions for convergence of the original DASF algorithm\footnote{Due to space constraints, and since some of these conditions are quite technical, we refer the reader to \cite{musluoglu2022unifiedp1,musluoglu2022unifiedp2}.} (see \cite{musluoglu2022unifiedp1,musluoglu2022unifiedp2}). Then the sequences $(X_k^i)_i$, for each $k\in\mathcal{K}$, obtained from the DANSF algorithm also converge respectively to an optimal point $X_k^*\in\mathcal{X}_k^*$ of Problem (\ref{eq:prob}).
\end{thm}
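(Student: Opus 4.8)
The plan is to reduce the convergence of DANSF to the already-established convergence of the vanilla DASF algorithm. By Theorem \ref{thm:f_conv} the feasibility of $X_k^i$ (for $i>0$) and the monotone convergence of every cost sequence $(\varphi_k^i)_i$ are already in hand, so the only thing left is to show that the iterates $(X_k^i)_i$ themselves converge to a point of $\mathcal{X}_k^*$. I would fix an arbitrary reference node $r\in\mathcal{K}$ and regard $\mathbb{P}_r$ as a reference problem; since $r$ is itself a node, $\mathbb{P}_r$ automatically inherits all the technical conditions assumed for the DASF convergence result, so the DASF convergence theorem of \cite{musluoglu2022unifiedp2} applies to a hypothetical vanilla DASF run on $\mathbb{P}_r$, whose compressor iterates I will denote $(X_{\mathrm{ref}}^i)_i$ and which converge to some $X_r^*\in\mathcal{X}_r^*$.

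The core of the argument is an invariance lemma establishing that the DANSF compressor trajectory $\{X_{kk}^i\}_{k}$ is equivalent, up to right-multiplication by invertible $Q\times Q$ gauge matrices, to the compressor trajectory of this reference DASF run. The mechanism is exactly the one exploited in the proof of Theorem \ref{thm:f_conv}: the local problems inherit Assumption \ref{asmp:proportional}, so whenever the updating node $q$ solves $\mathbb{P}_q$ in (\ref{eq:loc_prob}) it simultaneously solves the reference local problem up to an invertible transformation $\widetilde{D}_{q,r}^i$. Because the transmitted signal $\widehat{\mathbf{y}}_k^i=X_{kk}^{iT}\mathbf{y}_k$ is determined by each compressor only up to an invertible change of basis of its $Q$ output channels, such right factors do not alter the effective filter output and are compensated at the receiving nodes through the matrices $G_{qn}^{i+1}$ and $F_{kq}^{i+1}$ appearing in (\ref{eq:X_q_est}) and (\ref{eq:update_X_k}). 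I would make this precise by induction on $i$: assuming that the DANSF and reference compressors agree up to per-node invertible gauges at iteration $i$, one verifies that a single DANSF update reproduces the reference DASF update composed with an invertible gauge, using (\ref{eq:X_q_est})--(\ref{eq:update_X_k}) to track how the gauges propagate. Convergence of $(X_{\mathrm{ref}}^i)_i$ then forces convergence of $(X_{kk}^i)_i$, and the reconstruction $X_k^{i+1}=X_q^{i+1}F_{kq}^{i+1}$ in (\ref{eq:update_X_k}) together with the fixed proportionality $X_k^*=X_r^*D_{k,r}$ of Assumption \ref{asmp:proportional} yields $X_k^i\to X_k^*\in\mathcal{X}_k^*$ for every $k$.

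The main obstacle is making this gauge-equivalence exact rather than merely approximate, and two points need care. First, the per-iteration gauge $\widetilde{D}_{q,r}^i$ supplied by Assumption \ref{asmp:proportional} is only guaranteed to exist, not to be canonical, so I must verify that the DANSF solution-selection rule in step 4a --- choosing the minimizer of $\|\widetilde{X}_q-\widetilde{X}_q^i\|_F$ among all solutions of (\ref{eq:loc_prob}) --- is compatible with the corresponding selection in the reference run, and that the induced update map is continuous, which is what the accumulation-point argument borrowed from \cite{musluoglu2022unifiedp2} ultimately requires. Second, one must confirm that the fixed transformations $D_{k,l}$ commute with the compressed stationarity conditions, so that optimality for $\mathbb{P}_r$ transfers through (\ref{eq:update_X_k}) to optimality for every $\mathbb{P}_k$ simultaneously; this is where the coupling encoded in Assumption \ref{asmp:proportional} does the essential work and where the argument genuinely departs from the single-problem DASF proof. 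Once continuity of the update map and this lifting of optimality are secured, the remaining steps --- extracting a subsequence along a fixed updating node via the round-robin schedule and a pigeonhole argument, and upgrading subsequential to full convergence using the monotone cost of Theorem \ref{thm:f_conv} --- are routine and mirror \cite{musluoglu2022unifiedp2}.
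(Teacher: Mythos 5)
The paper gives no proof of this theorem to compare against: it is explicitly omitted, with only the remark that it ``follows similar steps as in \cite{musluoglu2022unifiedp2}''. Judged on its own terms, your strategy --- an exact trajectory equivalence between the DANSF run and a hypothetical vanilla DASF run on a reference problem $\mathbb{P}_r$, up to per-node invertible gauges --- is stronger than what is needed and leaves two gaps that are not routine. First, the tie-breaking rule in step 4a minimizes $\|\widetilde{X}_q-\widetilde{X}_q^i\|_F$, and the Frobenius norm is not equivariant under right-multiplication by a general invertible $\widetilde{D}_{q,r}^i$ (only under an orthogonal one); Assumption \ref{asmp:proportional} constrains solution \emph{sets}, not which element the selection rule picks, so your induction ``DANSF update equals reference update composed with a gauge'' can break at the first iteration where a local problem (\ref{eq:loc_prob}) has multiple solutions. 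You flag this yourself but offer no mechanism to repair it.

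Second, and more fundamentally, the gauges you introduce are node-specific while the in-network fusion (\ref{eq:sum_fwd})--(\ref{eq:sum_fwd_n}) \emph{sums} compressed signals over all nodes of a branch: if $X_{kk}^i=X_{\mathrm{ref},k}^i E_k^i$ with $E_k^i$ depending on $k$, then $\sum_{k\in\mathcal{B}_{nq}}X_{kk}^{iT}\mathbf{y}_k=\sum_{k}E_k^{iT}X_{\mathrm{ref},k}^{iT}\mathbf{y}_k$ is not an invertible transform of the reference branch signal unless $E_k^i$ is constant over the branch. This is exactly where DANSF departs from DASF: by (\ref{eq:X_q_est})--(\ref{eq:update_X_k}) each compressor picks up its own factor $F_{kq}^{i+1}$ at every iteration, so branch-constancy of the gauges is precisely the nontrivial invariant your induction would have to establish, and the proposal never does. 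The claim that the right factors ``do not alter the effective filter output and are compensated at the receiving nodes'' is true for a single node's output channels but does not commute with the sum-and-forward step. A proof along the lines the authors indicate would instead rerun the accumulation-point argument of \cite{musluoglu2022unifiedp2} directly on the DANSF iterates, using the monotone cost of Theorem \ref{thm:f_conv} and the coupling inherited by the local problems, rather than constructing an exact gauge equivalence with a separate reference trajectory.
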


Note that this also implies that each node has access to its optimal node-specific filter output $\mathbf{z}_k^*(t)=X_k^{*T}\mathbf{y}(t)$ for all samples collected after convergence of the algorithm. Sensor observations used \textit{during} convergence of the algorithm are fused suboptimally (similar to how an adaptive filter initially produces suboptimal filter outputs). The algorithm can be used in an adaptive or tracking context if the dynamics of the statistics change slowly, i.e., slower than the convergence time of the DANSF algorithm.

\section{Simulations}
\label{sec:sim}

We consider the following node-specific problem at each node $k$
\begin{equation}\label{eq:qclp_k}
  \begin{aligned}
  \underset{X_k\in\mathbb{R}^{M\times Q}}{\text{minimize } } & \quad \text{trace}(X_k^{T}B_k)\\
  \textrm{subject to } & \quad \mathbb{E}[||X_k^{T}\mathbf{y}(t)||^2]=\text{trace}(X_k^{T}R_{\mathbf{yy}}X_k)\leq 1,
  \end{aligned}
\end{equation}
where $R_{\mathbf{yy}}=\mathbb{E}[\mathbf{y}(t)\mathbf{y}^T(t)]$. Taking $B_k\neq 0$, the unique solution of the problem is given by $X_k^{*}=-\beta_k\cdot R_{\mathbf{yy}}^{-1}B_k$, 
where $\beta_k=\sqrt{\text{trace}(B_k^{T}R_{\mathbf{yy}}^{-1}B_k)^{-1}}$. For each $k$, we take $B_k=B\cdot D_k$, where each element of $B$ and $D_k$'s are drawn from a Gaussian distribution with zero-mean and variance $1$, i.e., $\mathcal{N}(0,1)$. Note that Problem (\ref{eq:qclp_k}) satisfies Assumption \ref{asmp:proportional} as $X_k^{*}=X_l^{*}\cdot D_{k,l}$, where $D_{k,l}=D_l^{-1}D_k\cdot\beta_k/\beta_l$. The signal $\mathbf{y}$ follows the mixture model $\mathbf{y}(t)=A\cdot \mathbf{d}(t)+\mathbf{n}(t)$, where each element of $\mathbf{d}\in\mathbb{R}^Q$ and $\mathbf{n}\in\mathbb{R}^M$ independently follows $\mathcal{N}(0,0.5)$ and $\mathcal{N}(0,0.1)$ respectively, at each time sample. Every entry of $A\in\mathbb{R}^{M\times Q}$ is drawn from $\mathcal{N}(0,0.2)$. At each iteration $i$ of the DANSF algorithm, the updating node $q$ solves its local problem (\ref{eq:loc_prob}) given by
\begin{equation}\label{eq:qclp_k_loc}
    \underset{\widetilde{X}_q\in\mathbb{R}^{\widetilde{M}_q\times Q}}{\text{min. } } \text{trace}(\widetilde{X}_q^{T}\widetilde{B}_q^i),\quad \textrm{s. t. }\text{trace}(\widetilde{X}_q^{T}R^i_{\widetilde{\mathbf{y}}_q\widetilde{\mathbf{y}}_q}\widetilde{X}_q)\leq 1,
\end{equation}
where $\widetilde{\mathbf{y}}_q^i$ and $\widetilde{B}_q^i$ are defined as in (\ref{eq:tree_data}) and $R^i_{\widetilde{\mathbf{y}}_q\widetilde{\mathbf{y}}_q}=\mathbb{E}[\widetilde{\mathbf{y}}_q^i(t)\widetilde{\mathbf{y}}_q^{iT}(t)]$. 

\looseness=-1
The performance of the DANSF algorithm is assessed by computing the relative mean squared error (MSE) $\epsilon_k(X_k^i)=||X_k^i-X_k^*||_F^2\cdot||X_k^*||_F^{-2}$, where $X_k^*$ is the solution of (\ref{eq:qclp_k}) for node $k$. In all our experiments, we take $Q=3$, $K=10$ and $M_k=7$ for every $k\in\mathcal{K}$. Figure \ref{fig:nodes_conv} shows the MSE $\epsilon_k$ for every node $k$ and for different network topologies namely fully-connected networks, networks with line topologies, i.e., each node has two neighbors except two which have a single neighbor, and networks with randomly generated topologies. In every experiment, $\mathcal{T}^i(\cdot,q)$ was taken to be the shortest path pruning function. We observe that the DANSF algorithm converges to $X_k^*$ for every node $k$ of the network, as stated in Theorem \ref{thm:X_conv_thm}, although at different convergence rates for different topologies. Fully-connected networks converge the fastest, while the slowest convergence rate is obtained for networks with a line topology. A similar result was observed for the DASF algorithm, where networks with more connected topologies lead to faster convergence rates \cite{musluoglu2022unifiedp1}. Additionally, we see from Figure \ref{fig:nodes_conv} that each node's estimate of its variable $X_k$ converges to the respective optimal value $X_k^*$ without large deviations in convergence rate between different nodes.

\begin{figure}[t]
  \includegraphics[width=0.48\textwidth]{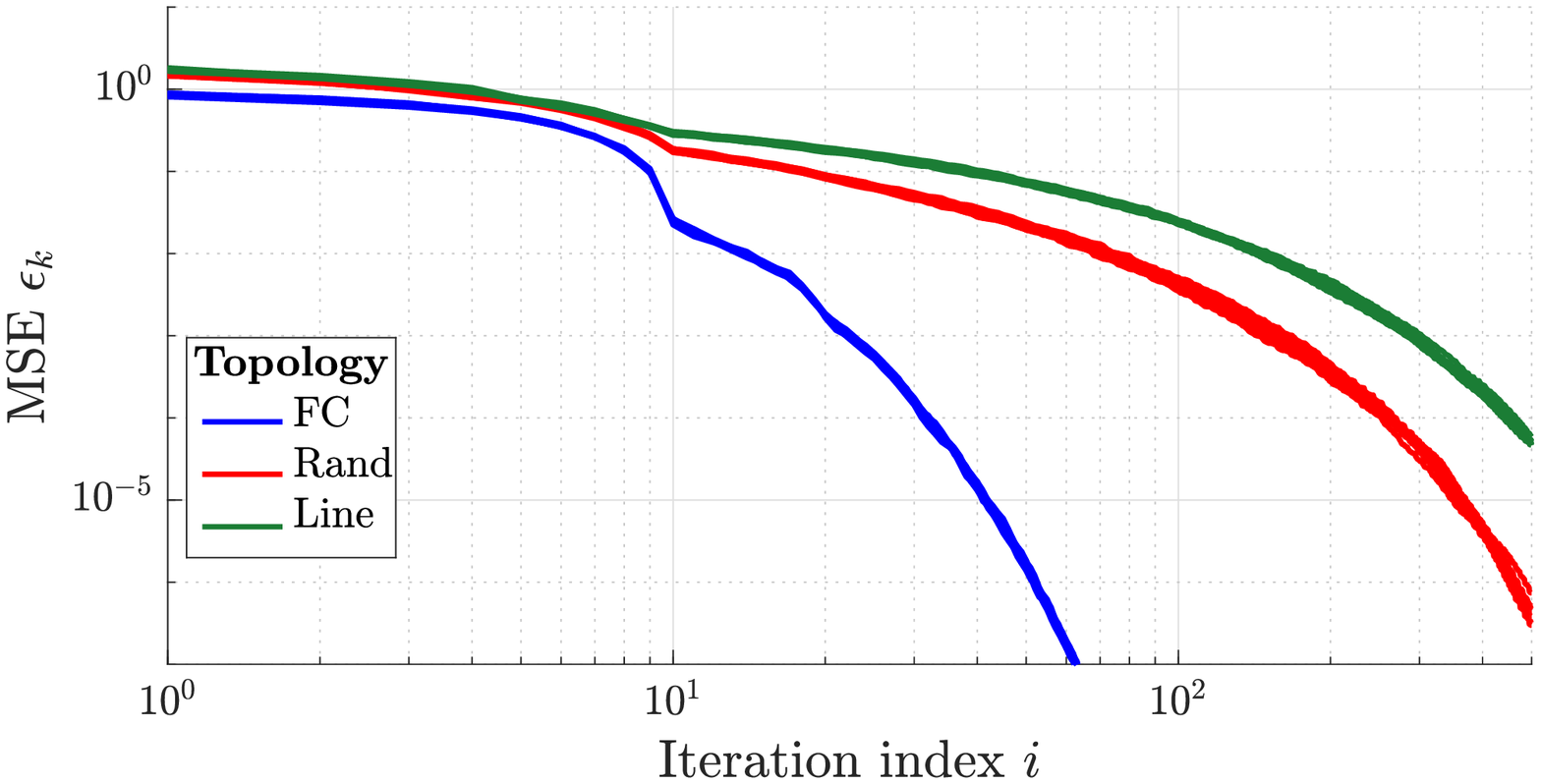}
  \caption{MSE $\epsilon_k$ for all nodes $k$ of the DANSF algorithm for various network topologies, namely fully-connected networks (\textit{FC}), randomly generated networks using the Erd\H{o}s-R\'enyi model (\textit{Rand}) and graphs in a line topology (\textit{Line}). Each point has been obtained by taking the median of 100 Monte-Carlo runs.}
  \label{fig:nodes_conv}
\end{figure}

\section{Conclusion}

We have proposed the DANSF algorithm to solve node-specific signal fusion problems in a distributed fashion over a network. The DANSF algorithm builds upon the principles of the DASF framework and extends it to problems with different optimization problems at each node, yet with coupled solution sets, which leads to analogous convergence results between both algorithms. We provided a proof for the convergence in cost, which showed that we obtain a monotonic decrease of the cost at each node. Simulations of the DANSF algorithm applied on a new problem validated our convergence claims.

\vfill\pagebreak

\bibliographystyle{IEEEbib}
\bibliography{refs}

\end{document}